\newtheorem{theorem}{Theorem}[section]
\newtheorem{lemma}[theorem]{Lemma}
\newtheorem{conjecture}[theorem]{Conjecture}
\newtheorem{observation}[theorem]{Observation}
\newtheorem{claim}[theorem]{Claim}
\newtheorem{definition}[theorem]{Definition}
\numberwithin{theorem}{section}
\newenvironment{cproof}
{\begin{proof}
 [Proof.]
 \vspace{-1.5\parsep}%-3.2\parsep} %%%For use with US letter
}
{ \end{proof}}
\newcommand{\CE}{C_k^{H}}
\newcommand{\CC}{C_{2n+1}}
\newcommand{\KE}{K^{C_{2n+1}}_{Even}}
\newcommand{\KKK}{K_3^{C_{2n+1}}}
\newcommand{\BT}{B_{2n+1}}
\def\pab{{\vec{p}_{ab}}}
\def\vc{{\vec{C}_{2n+1}}}
\def\vcs{{\vec{C}_{2n+1}}}
\def\lf{{$\ell_f$}}
\def\lfs{{$\ell_f$}}
\begin{document} \bibliographystyle{alpha}
\title{Explicit 3-colorings for exponential graphs}

\author{\textsc{Adrien Argento\thanks{Universit\'e Grenoble Alpes.
      Supported by a Stage d'Excellence from UGA.}} \and {\sc Pierre
  Charbit\thanks{Universit\'e Paris Diderot.}} \and \textsc{Alantha
    Newman\thanks{Universit\'e Grenoble Alpes.  Supported in
      part by IDEX-IRS SACRE.}}}

\date{March 13, 2019}
\maketitle

\begin{abstract}
For a graph $H$ and integer $k \geq 1$, two functions $f, g$ from
$V(H)$ into $\{1, \dots, k\}$ are adjacent if for all edges $uv$ of
$H$, $f(u) \neq g(v)$. The graph of all such functions is the
exponential graph $K_k^H$.  El-Zahar and Sauer proved that if $\chi(H)
\geq 4$, then $K_3^H$ is 3-chromatic~\cite{el1985chromatic}.  Tardif
showed that, implicit in their proof, is an algorithm for 3-coloring
$K_3^H$ whose time complexity is polynomial in the size of
$K_3^H$~\cite{tardifAlg}.  Tardif then asked if there is an
``explicit'' algorithm for finding such a coloring: Essentially, given
a function $f$ belonging to a 3-chromatic component of $K_3^H$, can we
assign a color to this vertex in time polynomial in the size of $H$?
The main result of this paper is to present such an algorithm,
answering Tardif's question affirmatively.  Our algorithm yields an
alternative proof of the theorem of El-Zahar and Sauer that the
categorical product of two 4-chromatic graphs is 4-chromatic.
\end{abstract}

\section{Introduction}

For a graph $G$, we use $V(G)$ and $E(G)$ to denote its vertex and
edge sets, respectively, and we use $\chi(G)$ to denote its chromatic
number.  A homomorphism from a graph $G$ to a graph $G'$ is a function
$\phi$ from $V(G)$ to $V(G')$ such that for every edge $uv$ in $G$,
$\phi(u)\phi(v)$ is an edge in $G'$.  We denote by $G \rightarrow G'$
the existence of a homomorphism from $G$ to $G'$.  Note that a graph
$G$ admits a proper $k$-coloring if and only if $G \rightarrow K_k$.

The categorical product of two graphs $G \times H$ has vertex set
$V(G) \times V(H)$ and edge set $((u,v),(\bar{u},\bar{v}))$ for
$u\bar{u}$ and $v\bar{v}$ belonging to $E(G)$ and $E(H)$,
respectively.  Observe that $G \times H$ admits a homomorphism to both
$G$ and $H$.  Since a proper colouring corresponds to a homomorphism
to a complete graph, and since $G\times H\rightarrow G$ and $G\times
H\rightarrow H$, it is therefore immediate that $\chi(G \times H) \leq
\min\big(\chi(G), \chi(H)\big)$.  The following conjecture is due to
Hedetniemi~\cite{hedetniemi1966homomorphisms} and was also posed as a
question by Greenwell and Lov{\'a}sz~\cite{greenwell1974applications}.
\begin{conjecture}\label{conj:first}
$\chi(G \times H) = \min\big(\chi(G), \chi(H)\big)$.
\end{conjecture}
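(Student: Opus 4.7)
The plan is to prove the nontrivial direction $\chi(G \times H) \geq \min(\chi(G), \chi(H))$; the other direction is already recorded in the excerpt. Set $k = \min(\chi(G), \chi(H))$; my goal is to rule out any homomorphism $G \times H \to K_{k-1}$.

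The starting point is the classical adjunction $G \times H \to K_{k-1}$ iff $G \to K_{k-1}^H$. Under this correspondence, a putative $(k-1)$-coloring of $G \times H$ is exactly a homomorphism from $G$ into the exponential graph $K_{k-1}^H$: given $\phi : V(G) \times V(H) \to [k-1]$, the map $u \mapsto \phi(u, \cdot)$ sends $V(G)$ into $V(K_{k-1}^H)$, and the edge condition for $\phi$ on $G \times H$ matches exactly the edge relation in $K_{k-1}^H$. Hence Conjecture~\ref{conj:first} is equivalent to the statement: whenever $\chi(H) \geq k$, the exponential graph $K_{k-1}^H$ itself satisfies $\chi(K_{k-1}^H) \geq k$, because then no graph $G$ of chromatic number at least $k$ can map into it.

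My plan is therefore to prove, by induction on $k$, the auxiliary claim: if $\chi(H) \geq k$, then $\chi(K_{k-1}^H) \geq k$. The cases $k \leq 3$ are easy, and the case $k = 4$ is exactly the El-Zahar--Sauer theorem cited in the excerpt; the present paper even provides an \emph{explicit} witness, which is the ingredient I would try to exploit. Concretely, given an alleged $(k-1)$-coloring $c$ of a connected component of $K_{k-1}^H$, I would attempt to extract from $c$ a structured homomorphism from $H$ into a smaller target whose existence contradicts $\chi(H) \geq k$, generalizing the way the El-Zahar--Sauer analysis of $K_3^H$ traces a forbidden odd-cycle pattern back into $H$. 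The explicit labels produced by the algorithm of the present paper are the natural seed for such an extraction, since they encode \emph{where} in $H$ the obstruction lives.

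The step I expect to dominate the difficulty is the induction itself. There is no clean functorial passage from $K_{k-1}^H$ to $K_{k-2}^{H'}$ for a convenient auxiliary graph $H'$, and the combinatorics that make the $k=4$ argument work rely heavily on $K_3$ being an odd cycle and on the very restricted structure of components of $K_3^H$. For larger $k$ these features break down, which is precisely why the conjecture has resisted proof for half a century despite being settled for $k \leq 4$. Any honest completion of the plan therefore depends on discovering new structural information inside $K_{k-1}^H$ beyond what is available for $k = 4$; this is the step I expect to be by far the hardest, and in the absence of such an insight the proposal reduces to the (already highly nontrivial) explicit algorithm that this paper does in fact deliver.
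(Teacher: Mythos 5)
The statement you are attempting is Hedetniemi's Conjecture, which the paper states as Conjecture~\ref{conj:first} precisely because it does not prove it; the paper only treats the case $k=3$ of the equivalent Conjecture~\ref{conj:second} (the theorem of El-Zahar and Sauer) and supplies an explicit algorithm realizing that case. Your proposal is likewise not a proof: you say yourself that the inductive step for $k\geq 5$ requires ``new structural information'' that you do not supply, so the plan terminates exactly where the open problem begins. (Indeed, no completion is possible: Shitov exhibited counterexamples to the conjecture for large chromatic numbers shortly after this paper was written, so the general statement is false.)

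Beyond the missing step, the reduction you set up is stated backwards, and the error matters. Writing $k=\min(\chi(G),\chi(H))$, a $(k-1)$-coloring of $G\times H$ yields a homomorphism $G\rightarrow K_{k-1}^{H}$; to contradict $\chi(G)\geq k$ you need the exponential graph to map onward to $K_{k-1}$, i.e.\ you need $\chi(K_{k-1}^{H})\leq k-1$ whenever $\chi(H)\geq k$, which is Conjecture~\ref{conj:second} with $k$ replaced by $k-1$. You instead assert the auxiliary claim $\chi(K_{k-1}^{H})\geq k$ and argue that ``no graph $G$ of chromatic number at least $k$ can map into it.'' That inference is invalid: a large chromatic number of the target does not obstruct incoming homomorphisms (every graph maps into $K_n$ for large $n$); it is a \emph{small} chromatic number of the target that constrains $G$. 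The same sign error corrupts your base case: El-Zahar and Sauer prove $\chi(K_3^{H})\leq 3$ when $\chi(H)\geq 4$, not $\chi(K_3^{H})\geq 4$. Once the inequality is turned the right way around, what remains is precisely the open Conjecture~\ref{conj:second} for $k\geq 4$, on which the proposal offers no progress; the present paper's contribution is an explicit and efficient coloring in the settled case $k=3$, not a proof of the conjecture.
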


For two graphs $H$ and $K$, there exists an {\em exponential graph}
$K^{H}$ with the following property: $G\times H \rightarrow K$ only if
$G\rightarrow K^{H}$.  The vertices of $K^H$ are functions from $V(H)$
into $V(K)$, and two functions $f, g$ are adjacent if for every edge
$uv$ of $H$, $f(u)g(v)$ is an edge of $K$.  With this definition of an
exponential graph, observe that Hedetniemi's conjecture can be
rewritten: If $\chi(H)>k$ and $G\rightarrow K_k^{H}$, then
$G\rightarrow K_{k}$. But now we see that it is sufficient to replace
$G$ with $K_k^{H}$, and Conjecture \ref{conj:first} can be restated as
follows.
\begin{conjecture}\label{conj:second}
If $\chi(H)>k$, then $\chi(K_k^{H})\leq k$.
\end{conjecture}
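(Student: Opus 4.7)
Conjecture \ref{conj:second} restates Hedetniemi's conjecture and is widely believed to be hard (and in fact is known to be false for large $k$), so I describe a plan for the case $k = 3$ targeted by the paper's title and abstract. Given $H$ with $\chi(H) \geq 4$, the goal is to construct an explicit 3-coloring $\sigma$ of $K_3^H$. My first step is to calibrate using the three constant functions $c_1, c_2, c_3 : V(H) \to \{1,2,3\}$, which are pairwise adjacent in $K_3^H$ (as soon as $H$ has an edge) and must therefore receive three distinct colors; set $\sigma(c_i) = i$. This already forces $\sigma(f) \neq i$ for any $f$ whose image avoids the color $i$, since then $f$ is adjacent to $c_i$.

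\textbf{Rule via monochromatic edges.} For a general $f$, the assumption $\chi(H) > 3$ forces the monochromatic subgraph $M(f) := \{uv \in E(H) : f(u) = f(v)\}$ to be non-empty; partition it into color classes $M_1(f), M_2(f), M_3(f)$ according to the common value. Let $T(f) := \{i \in \{1,2,3\} : M_i(f) \neq \emptyset\}$: this is a non-empty subset of colors, and on constants $T(c_i) = \{i\}$ agrees with $\sigma(c_i) = i$. The plan is to define $\sigma(f)$ to be a distinguished element of $T(f)$, chosen by a combinatorial rule that agrees with $\sigma$ on constants and that must flip whenever one crosses an edge of $K_3^H$. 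The central lemma would state that if $f$ and $g$ are adjacent in $K_3^H$, then the rule returns different colors on $f$ and $g$.

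\textbf{Main obstacle.} Designing this rule is the heart of the matter. Purely local recipes (e.g.\ ``color of the lexicographically least monochromatic edge'') look too fragile, since the adjacency condition ``$f(u) \neq g(v)$ for every $uv \in E(H)$'' couples $f$ and $g$ globally across $H$ rather than edge by edge. I expect the right invariant to be walk-based: from a canonical anchor vertex or edge in $H$, follow a walk whose step rule depends on the values of $f$, and track a cyclic or parity counter whose residue is read off as $\sigma(f)$. The main technical work is then to prove (i) the walk exists and the counter is well-defined independently of the choices along the way, and (ii) whenever $f$ and $g$ are adjacent the counter necessarily shifts, so the colors differ. Turning this combinatorial invariant into an algorithm running in time polynomial in $|V(H)|$, rather than in $|V(K_3^H)|$ (which is exponential in $|V(H)|$), is precisely the paper's contribution answering Tardif's question.
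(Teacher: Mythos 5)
The statement you are addressing is labelled a \emph{conjecture} in the paper (it is Hedetniemi's conjecture rephrased via exponential graphs), and the paper offers no proof of it; as you correctly note, for general $k$ none exists and the statement is in fact false for large $k$. So the only substantive comparison is between your plan for $k=3$ and what the paper actually does, and there the gap is that your proposal is a research outline rather than a proof: the ``combinatorial rule'' whose existence you posit, together with the central lemma that it separates adjacent functions, \emph{is} the entire content of the El-Zahar--Sauer theorem, and you leave both undefined and unproved. The parts you do establish (the three constants are pairwise adjacent and must get distinct colors; $\chi(H)>3$ forces every $f$ to have a monochromatic edge, so $T(f)\neq\emptyset$) are correct but are the easy observations. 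Moreover, the specific framework you propose --- choosing $\sigma(f)$ from the set $T(f)$ of colors carrying a monochromatic edge of $f$ --- is doubtful as stated: two adjacent functions $f$ and $g$ can each have a monochromatic edge of the same color $i$ on two far-apart, non-adjacent edges of $H$, so nothing forces a rule of this shape to separate them; and the paper's own coloring does not even satisfy $\sigma(f)\in T(f)$ in general, since when $f(a)\neq f(b)$ it outputs $f(a)$ or $f(b)$, which need not be the color of any monochromatic edge of $f$.

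What the paper actually does is quite different and more localized. It first reduces to a single odd cycle: by Proposition 4.1 and Lemma 3.3 of El-Zahar and Sauer, every non-isolated $f$ admits an odd cycle $C_{2n+1}$ of $H$ on which it has an even number of fixed points, and this cycle can be fixed for the whole connected component, so it suffices to $3$-color $\KE$. On that cycle it fixes an edge $ab$; functions with $f(a)=f(b)$ get color $f(a)$ and form a hitting set for odd cycles, and the remaining bipartite part $\BT$ is split by comparing the little-path value $p_f$ (a signed sum of $\Delta$-values along the directed chord path from $a$ to $b$) with $\ell_f/2$. The separation of adjacent functions is Lemmas \ref{label-lemma} and \ref{little-path-lemma} together with Claims \ref{clm:eitherOr} and \ref{clm:anticorrelation}: adjacency forces $\ell_f=\ell_g$ and $\ell-1\leq p_f+p_g\leq \ell+1$, so $p_f$ and $p_g$ fall on opposite sides of $\ell/2$. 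Your closing guess that the right invariant is ``walk-based with a cyclic counter'' points in the correct direction --- $p_f$ is exactly such a potential along $\vec{C}_{2n+1}$ --- but identifying that invariant and proving the anticorrelation across edges of $K_3^H$ is precisely the missing heart of the argument.
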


The connection between exponential graphs and Hedetniemi's conjecture
was observed by El-Zahar and Sauer who proved Conjecture
\ref{conj:second} when $k=3$~\cite{el1985chromatic}.  
(Exponential
graphs have also been studied in other contexts~\cite{lovasz1967operations}.)  
Specifically, El-Zahar and Sauer
proved that if $\chi(H) > 3$, then $K_{3}^{H}$ is
$3$-colorable.  Their proof is based on a global parity argument
concerning so-called fixed points of odd cycles and does not
immediately yield a 3-coloring of $K_3^H$.  Attributing the question
of efficiently finding a 3-coloring to Edmonds, Tardif presented an
algorithm, implicit in the work of El-Zahar and Sauer, for 3-coloring
$K_3^H$~\cite{tardifAlg}.  Basically, Tardif noted that if we consider
an (arbitrary) edge $ab$ in any odd cycle of $H$, then functions $f$
in which $f(a) = f(b)$ form a hitting set of the odd cycles in
$K_3^H$.  A 3-coloring can easily be found based on a bipartition of
the remaining vertices in $K_3^H$.  The time complexity of this
algorithm depends on the time to find a bipartition, which is
polynomial in the size of $K_3^H$, but can be exponential in $|V(H)|$.
A thorough description of this algorithm is provided in Section
\ref{sec:tardif-alg}.

Tardif then posed the problem of finding an ``explicit'' 3-coloring of
$K_3^H$.  Essentially, given a function $f$ belonging to a 3-chromatic
component of $K_3^H$, can we assign a color to this
vertex in time polynomial in the size of $H$?  (His precise question
is a bit more involved and presented in detail in Section
\ref{sec:tardif-alg}.)  In this paper, we present an
algorithm whose time complexity is linear in $|V(H)|$ for finding such an
explicit 3-coloring.

\subsection{From coloring $K_3^{C_{2n+1}}$ to coloring
  $K_3^H$}\label{from-coloring}

Let $n$ be a positive integer, and let $C_{2n+1}$ denote the odd cycle
on the vertices $u_1, u_2, \dots, u_{2n+1}$.  The vertex set $V(\KKK)$
consists of all $3^{2n+1}$ functions from $V(C_{2n+1})$ into
$\{1,2,3\}$.  A vertex $u_i \in C_{2n+1}$ is a {\em fixed point} in $f
\in V(\KKK)$ if $f(u_{i-1}) \neq f(u_{i+1})$, where indices are
computed modulo $2n+1$.  Let $\KE$ denote the subgraph of $\KKK$
induced on the following vertex set.
\begin{eqnarray*}
V(\KE) & = & \left\{f \in V(\KKK):~ f \text{ has an even number
  of fixed points}\right\}.
\end{eqnarray*}

The problem of finding a 3-coloring of $K_3^H$ when $\chi(H) > 3$ can
be reduced to the problem of finding a 3-coloring for
$\KE$~\cite{el1985chromatic, tardifAlg}.  Any non-isolated function
$f$ from $V(H)$ into $\{1,2,3\}$ contains an odd cycle with an even
number of fixed points and such an odd cycle can be found in time
polynomial in the size of $H$ (see Proposition 4.1 in
\cite{el1985chromatic} or Claim 2 from
\cite{zhu1998survey}).\footnote{Observe that the proof of this claim
  becomes easier when $\chi(H) \geq 5$.  Consider any function $f$
  from $V(H)$ into $\{1,2,3\}$ and partition $V(H)$ into the set of
  vertices colored by $1$ and $2$ and the set of vertices colored by
  $3$.  Either the first set or the second set contains an odd cycle
  (which then has an even number of fixed points because it has at
  most two colors) or each set is bipartite, and we can color with
  four colors, which is a contradiction.}  For the connected component
of $K_3^H$ containing this function $f$, let us fix this odd cycle in
$H$ to be $C_{2n+1}$.  Now we find a 3-coloring for $\KE$.  Each
function $g$ in the same connected component of $K_3^H$ as $f$ must
also have an even number of fixed points on $C_{2n+1}$ (see Lemma 3.3
in \cite{el1985chromatic} or Claim 1 from \cite{zhu1998survey}).  So
for each function $g \in V(K_3^H)$ in the same connected component as
$f$, the restriction of $g$ onto $C_{2n+1}$ belongs to $V(\KE)$.
Therefore, the vertex in $V(K_3^H)$ corresponding to the function $g$
can be assigned the same color that the restriction of $g$ onto
$C_{2n+1}$ receives in the 3-coloring of $\KE$.

\subsection{Algorithm for 3-coloring $\KE$}\label{sec:tardif-alg}

Now we are ready to present the algorithm for 3-coloring $\KE$ from
\cite{tardifAlg}.  Let $ab$ denote a fixed (but arbitrarily chosen)
edge in $C_{2n+1}$.

\vspace{5mm}

\noindent
\fbox{\parbox{16cm}{

{\sc Color-Graph($\KE$)}

\begin{enumerate}

\item For all $f \in V(\KE)$:

{\begin{enumerate}
\item[(i)] If $f(a) = f(b)$, assign the vertex $f$
  in $V(\KE)$ the color $f(a)$.
\end{enumerate}
}

\item Remove all the colored vertices from $V(\KE)$.

\item Find a bipartition $(A,B)$ of the subgraph induced on the
  remaining vertices.

\item For each vertex $f$ in $A$,
  assign the vertex color $f(a)$.  

\item For each vertex $f$ in $B$,
  assign the vertex color $f(b)$.  

\end{enumerate}

}}

\vspace{3mm} The correctness of this algorithm follows from the fact
that the copies of $C_{2n+1}$ in which $f(a) = f(b)$ form a hitting
set for the odd cycles in $\KE$, which follows from the main result of
El-Zahar and Sauer (e.g., see Lemma 3.1 and Proposition 3.4 in
\cite{el1985chromatic}).  

\subsection{Tardif's open problem}

Tardif asked 
if there is an
algorithm, whose running time is polynomial in $n$, to assign a color
to $f \in V(\KE)$ so that a 3-coloring is maintained for any subset of
colored vertices of $\KE$.  (See Problem 6 in \cite{tardifAlg} and also
\cite{tardifFields}.)  He defines the vertex set
\begin{eqnarray*}
V(\BT) & = & \left\{f \in V(\KE):~ f(a) \neq f(b)\right\},
\end{eqnarray*}
for a fixed (but arbitrarily chosen) edge $ab \in C_{2n+1}$.  $\BT$ is
a bipartite subgraph of $\KE$ induced on $V(\BT)$.  If we can decide
in $O(n)$ time to which side of the bipartition $f \in V(\BT)$
belongs, then we can resolve Tardif's question affirmatively.  We
present an algorithm for this task in the next section.  Our approach
is inspired by ideas from reconfiguration of
3-recolorings~\cite{cereceda2007finding}.

We note that Tardif showed that the main result from
\cite{el1985chromatic} implies that $\BT$ is bipartite.  Conversely,
our algorithm gives another proof that $\BT$ is bipartite, and
consequently, we give an alternative proof of the main result of
\cite{el1985chromatic}.  Note, however, that our proof is not
completely independent as it uses Lemma 3.3 and Proposition 4.1 from
\cite{el1985chromatic}.

\section{Properties of adjacent functions}

In this section, we state and prove two properties of functions that
are adjacent in $\KKK$.  These properties are key to the design and
analysis of our algorithm, which we present in Section
\ref{coloring-alg}.  For an ordered pair of vertices $uv$ (i.e., an
    {\em arc} $uv$) with colors $c(u)$ and $c(v)$, respectively, we
    say the {\em value} of $uv$ is $\Delta(c(u)c(v))$, where
\begin{eqnarray*}
\Delta(12) = \Delta(23) = \Delta(31) = +1 ~\text{ and }~
\Delta(21) = \Delta(32) = \Delta(13) = -1. 
\end{eqnarray*}
Monochromatic pairs have value 0.

Let $f$ denote a function from $V(\CC)$ into $\{1,2,3\}$.  For a
vertex $u \in \CC$, its color in $f$ is denoted by $f(u)$.  We fix the
orientation for the chords of length two in the cycle $\CC$ so that
they form the directed cycle $\{u_1, u_3, \dots u_{2n+1}, u_2, u_4,
\dots, u_{2n}, u_1\}$, which we refer to as $\vc$.  Then we have the
following definitions. Recall that $ab$ is a fixed (but arbitrarily
chosen) edge in $\CC$.
\begin{definition}\label{def:chord-cycle}
The {\em label} of $f$, denoted by \lf, is the total value of the arcs
in $\vcs$ based on $f$.  Formally,
$$\ell_f = \sum_{uv  \in \vc} \Delta(f(u)f(v)).$$
\end{definition}

\begin{definition}\label{def:little-path}
The {\em little path} of $f$, denoted by $\pab$, is the directed path
from $a$ to $b$ in $\vcs$ containing $n$ arcs (e.g., see Figure
\ref{fig:vec}).  The value of $\pab$, denoted by $p_f$, is computed as
follows.
$$p_f = \sum_{uv  \in \pab} \Delta(f(u)f(v)).$$
\end{definition}

\begin{figure}[h]
\begin{center}
\epsfig{file = 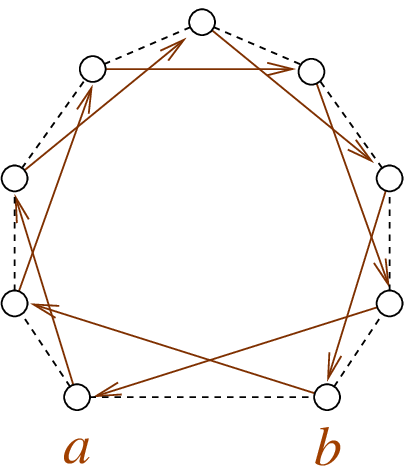, width=3cm} \hspace{15mm}
\epsfig{file = 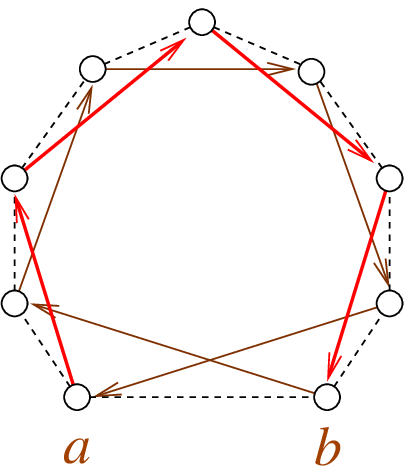, width=3cm} 
\caption{The dotted edges denote $C_9$.  The directed edges denote
  $\vec{C}_9$. The little path $\pab$ is shown in red.}
\label{fig:vec}
\end{center}
\end{figure}

\begin{observation}\label{zero-mod-3}
\lfs $\equiv 0 \pmod 3$.
\end{observation}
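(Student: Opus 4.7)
The plan is to observe that $\Delta$ encodes the difference of colors modulo $3$, so that summing $\Delta$-values around any directed cycle telescopes to zero mod $3$.

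First, I would verify the key arithmetic identity: for every $i,j \in \{1,2,3\}$,
$$\Delta(ij) \equiv j - i \pmod{3}.$$
This can be checked case by case from the definition. For the six non-monochromatic pairs: $\Delta(12) = 1 \equiv 2-1$, $\Delta(23) = 1 \equiv 3-2$, $\Delta(31) = 1 \equiv 1-3 \pmod 3$, and the three reverse pairs give $-1$, which likewise matches $j-i \pmod 3$. For monochromatic pairs, $\Delta(ii) = 0 = i - i$, so the identity holds in all cases.

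Next, I would substitute this identity into the definition of $\ell_f$:
$$\ell_f \;=\; \sum_{uv \in \vcs} \Delta(f(u)f(v)) \;\equiv\; \sum_{uv \in \vcs} \bigl(f(v) - f(u)\bigr) \pmod 3.$$
Since $\vcs$ is a directed cycle, each vertex $u_i$ of $\CC$ appears exactly once as the tail of an arc and exactly once as the head of an arc. Therefore the right-hand sum telescopes to $0$, and we conclude that $\ell_f \equiv 0 \pmod 3$.

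There is no real obstacle here; the entire content of the observation is the congruence $\Delta(ij) \equiv j-i \pmod 3$, after which the cycle structure of $\vcs$ does the rest for free. The same argument will later be useful for understanding how $\ell_f$ changes (or fails to change) when $f$ is modified locally, which presumably motivates stating this observation now.
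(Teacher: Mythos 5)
Your proof is correct, and it is exactly the telescoping argument ($\Delta(ij) \equiv j-i \pmod 3$ summed around the directed cycle $\vcs$) that the paper leaves implicit — the observation is stated without proof there. Nothing to add.
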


\begin{observation}\label{notmult}
If $f(a) \neq f(b)$, then $p_f \not \equiv 0 \pmod 3$.
\end{observation}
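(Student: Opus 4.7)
\medskip

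The plan is to reduce $p_f$ modulo $3$ and exploit the fact that the sum defining it telescopes along the directed path $\vec{p}_{ab}$.

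The key observation is that for any $i,j \in \{1,2,3\}$ we have
$$\Delta(ij) \equiv j - i \pmod 3.$$
I would verify this in one line by checking the six non-monochromatic cases plus the three monochromatic ones: $\Delta(12)=1\equiv 2-1$, $\Delta(23)=1\equiv 3-2$, $\Delta(31)=1\equiv 1-3$, the three reversed pairs give $-1$ and match $j-i$ modulo $3$, and the monochromatic pairs give $0 = i-i$. (Equivalently, $\Delta$ is the unique skew-symmetric $\{-1,0,+1\}$-valued function compatible with $\mathbb{Z}/3\mathbb{Z}$.)

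Using this identity inside the definition of $p_f$, the sum telescopes:
$$p_f \;=\; \sum_{uv \in \vec{p}_{ab}} \Delta(f(u)f(v)) \;\equiv\; \sum_{uv \in \vec{p}_{ab}} \bigl(f(v) - f(u)\bigr) \;\equiv\; f(b) - f(a) \pmod 3,$$
because $\vec{p}_{ab}$ is a directed path from $a$ to $b$, so each intermediate vertex contributes once positively and once negatively. Since $f(a), f(b) \in \{1,2,3\}$ are distinct, $f(b) - f(a) \in \{\pm 1, \pm 2\}$, none of which is divisible by $3$. Hence $p_f \not\equiv 0 \pmod 3$.

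There is no real obstacle here beyond spotting the congruence $\Delta(ij) \equiv j - i \pmod 3$; once that is in hand the claim is an immediate telescoping argument. I note in passing that the same telescoping argument applied to the full directed cycle $\vcs$ (rather than the sub-path $\vec{p}_{ab}$) gives $\ell_f \equiv 0 \pmod 3$, which is exactly Observation~\ref{zero-mod-3}, so the two observations share a single one-line proof.
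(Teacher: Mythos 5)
Your proof is correct, and it is essentially the intended argument: the paper states this as an unproved Observation, and the congruence $\Delta(ij) \equiv j-i \pmod 3$ together with telescoping along the directed path from $a$ to $b$ is exactly the one-line justification the authors are relying on. Your closing remark that the same telescoping over the full directed cycle yields Observation~\ref{zero-mod-3} is also correct and matches the paper's (implicit) reasoning.
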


Now let $f$ and $g$ be two functions from $V(C_{2n+1})$ into
$\{1,2,3\}$ such that $f$ and $g$ are adjacent.  Recall that $\{u_1,
u_2, \dots, u_{2n+1}\}$ denotes the vertex set $V(\CC)$.  Let $f_i$
denote the copy of $u_i$ in $f$ and let $g_i$ denote the copy of $u_i$
in $g$. Define the directed cycle $D$ as follows.
\begin{eqnarray}
D & = & \{f_1, g_2, f_3, \dots, g_{2n}, f_{2n+1}, g_1, f_2, \dots,
f_{2n}, g_{2n+1}, f_1\}.\label{directed-cycle}
\end{eqnarray}
We have
\begin{eqnarray}
\Delta(D) = \sum_{i=1}^{2n+1} \Delta(f(u_i)g(u_{i+1})) +
\sum_{i=1}^{2n+1} \Delta(g(u_i)f(u_{i+1})),\label{delta-cycle}
\end{eqnarray} 
where subscripts are computed modulo $2n+1$.  We can relate the value
of the arcs in $\vc$ to the value of the arcs in $D$ using the
following claim.
\begin{claim}\label{clm:map}
Let $f$ and $g$ be two functions from $V(\CC)$ into $\{1,2,3\}$ such
that $f$ and $g$ are adjacent.  Then
$$\Delta(f(u_i)f(u_{i+2})) = -\frac{\Delta(f(u_i)g(u_{i+1}))
  + \Delta(g(u_{i+1})f(u_{i+2}))}{2}.$$
\end{claim}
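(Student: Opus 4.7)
The plan is to reduce the claim to a finite case analysis on the three color values $x := f(u_i)$, $y := g(u_{i+1})$, $z := f(u_{i+2})$. Since $f$ and $g$ are adjacent in $\KKK$ and since both $u_i u_{i+1}$ and $u_{i+1} u_{i+2}$ are edges of $\CC$, the definition of the exponential graph forces $x \neq y$ and $y \neq z$. Consequently $\Delta(xy), \Delta(yz) \in \{-1,+1\}$, while $\Delta(xz) \in \{-1,0,+1\}$. The claim is then purely a statement about the function $\Delta$ on three-element inputs and can be checked by splitting on whether $x = z$.

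If $x = z$, the left-hand side equals $0$, and the antisymmetry $\Delta(pq) = -\Delta(qp)$ (immediate from the definition of $\Delta$) gives $\Delta(yz) = \Delta(yx) = -\Delta(xy)$, so the numerator on the right vanishes. If instead $x \neq z$, then $\{x,y,z\} = \{1,2,3\}$ and the triple $(x,y,z)$ either follows the cyclic order $1 \to 2 \to 3 \to 1$ or reverses it. In the first subcase $\Delta(xy) = \Delta(yz) = +1$ and the direct jump from $x$ to $z$ (skipping $y$) runs against the cyclic order, so $\Delta(xz) = -1 = -(1+1)/2$. The reverse subcase is obtained by flipping all signs, and yields $+1 = -(-1-1)/2$. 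Every case matches the identity.

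There is no real obstacle here beyond bookkeeping of the sign convention. A slightly more conceptual repackaging that I might prefer to write out is to observe that $\Delta(pq) \equiv q - p \pmod{3}$ once we identify $-1$ with $2$, so that $\Delta(xy) + \Delta(yz) \equiv \Delta(xz) \pmod{3}$; combined with the constraints $\Delta(xy), \Delta(yz) \in \{-1,+1\}$ and $\Delta(xz) \in \{-1,0,+1\}$, this congruence forces the specific integer relation claimed.
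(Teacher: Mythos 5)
Your proof is correct and takes essentially the same route as the paper: a finite case analysis on the three colors $f(u_i)$, $g(u_{i+1})$, $f(u_{i+2})$, split according to whether $f(u_i)=f(u_{i+2})$ and, if not, the cyclic orientation of the triple. In fact you supply slightly more justification than the paper, which merely asserts the two non-trivial implications; your closing mod-$3$ congruence observation is a pleasant extra but not a different proof.
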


\begin{cproof}
Since $f$ and $g$ are adjacent, we have $f(u_i) \neq g(u_{i+1})$ and
$f(u_i) \neq g(u_{i-1})$ for $i \in \{1, \dots, k\}$.  If $f(u_i) =
f(u_{i+2})$, then $\Delta(f(u_i)g(u_{i+1})) +
\Delta(g(u_{i+1})f(u_{i+2})) = 0$. Furthermore, we have the following
observations.

\begin{enumerate}

\item  $\Delta(f(u_i)f(u_{i+2})) = +1 \implies
\Delta(f(u_i)g(u_{i+1})) + \Delta(g(u_{i+1})f(u_{i+2})) = -2$, and

\item $\Delta(f(u_i)f(u_{i+2})) = -1 \implies \Delta(f(u_i)g(u_{i+1}))
  + \Delta(g(u_{i+1})f(u_{i+2})) = +2$.
\end{enumerate}\end{cproof}

Next, we present two key properties of adjacent functions via the
following lemmas.

\begin{lemma}\label{label-lemma}
Let $f$ and $g$ be two functions from $V(\CC)$ into $\{1,2,3\}$ such
that $f$ and $g$ are adjacent.
Then $\ell_{f} = \ell_{g}$.
\end{lemma}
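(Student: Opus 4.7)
The plan is to prove $\ell_f = \ell_g$ by writing both sides as (the same) expression derived from the interleaved directed cycle $D$. The key observation is that Claim \ref{clm:map} already converts a single chord-value $\Delta(f(u_i)f(u_{i+2}))$ into a combination of two cross-terms of the form $\Delta(f\cdot g)$, which are precisely the arc-values that appear in $\Delta(D)$ (equation (\ref{delta-cycle})). So the strategy is to sum the Claim \ref{clm:map} identity over all arcs of $\vec{C}_{2n+1}$ and recognize the right-hand side as $-\tfrac{1}{2}\Delta(D)$, then do the same thing with the roles of $f$ and $g$ exchanged.

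Concretely, first I would apply Claim \ref{clm:map} to the arc $u_i u_{i+2}$ for each $i \in \{1,\dots,2n+1\}$ (indices mod $2n+1$). Since $2n+1$ is odd, iterating the ``skip by two'' map starting at $u_1$ traverses every vertex before closing up, so the arcs $(u_i,u_{i+2})$ are exactly the $2n+1$ arcs of $\vec{C}_{2n+1}$. Summing these $2n+1$ equations, the left-hand side sums to $2\ell_f$, and the right-hand side, after a shift of index $j=i+1$ on the second piece, becomes $-\sum_i \Delta(f(u_i)g(u_{i+1})) - \sum_j \Delta(g(u_j)f(u_{j+1})) = -\Delta(D)$. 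Hence $2\ell_f = -\Delta(D)$.

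Then I would run the identical argument with $f$ and $g$ swapped; Claim \ref{clm:map} is symmetric in its hypothesis ($f$ and $g$ are adjacent), so it applies equally to $g$ and gives $2\Delta(g(u_i)g(u_{i+2})) = -\Delta(g(u_i)f(u_{i+1})) - \Delta(f(u_{i+1})g(u_{i+2}))$. Summing over $i$ and reindexing in the same way yields $2\ell_g = -\Delta(D)$. Equating the two identities gives $\ell_f = \ell_g$.

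I expect the only subtle step to be the bookkeeping in the middle paragraph: one must check that after the index shift, the two sums on the right-hand side are literally the two sums in (\ref{delta-cycle}) defining $\Delta(D)$, and that each arc of $D$ is counted exactly once. This works precisely because $2n+1$ is odd so that $\vec{C}_{2n+1}$ is a single cycle through all vertices; otherwise, the chord-skip would split into two cycles and the identification with $\Delta(D)$ would fail. No genuine new inequality or structural argument is needed — the lemma is a direct algebraic consequence of Claim \ref{clm:map} and the parity of the cycle length.
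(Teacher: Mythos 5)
Your proposal is correct and is essentially identical to the paper's own proof: both sum the identity of Claim \ref{clm:map} over all arcs of $\vec{C}_{2n+1}$ to obtain $\ell_f = -\Delta(D)/2$, and then repeat with $f$ and $g$ swapped to get $\ell_g = -\Delta(D)/2$. You simply spell out the reindexing and the fact that the skip-by-two arcs form a single Hamiltonian cycle, details the paper leaves implicit.
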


\begin{proof}
Recall that $\ell_f$ is defined as follows.
$$\ell_{f} = \sum_{i=1}^{2n+1} \Delta(f(u_i)f(u_{i+2})).$$ By Claim
\ref{clm:map}, we observe that $\ell_{f} = -\Delta(D)/2$.  The same
argument shows that $\ell_{g} = -\Delta(D)/2$.\end{proof}

\begin{lemma}\label{little-path-lemma}
Let $f$ and $g$ be two functions from $V(C_{2n+1})$ into $\{1,2,3\}$
such that $f$ and $g$ are adjacent and let $\ell = \ell_f =\ell_g$.
Then $\ell -1 \leq p_f + p_g \leq \ell + 1$.
\end{lemma}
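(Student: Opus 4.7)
The plan is to mimic the proof of Lemma \ref{label-lemma}: I will split the cycle $D$ into two arc-disjoint sub-paths whose $\Delta$-values encode $p_f$ and $p_g$ via Claim \ref{clm:map}, with only two ``closing'' arcs left over.

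Label the vertices of the little path by $v_0 = a, v_1, \dots, v_n = b$, and let $w_i \in V(\CC)$ be the vertex lying between $v_i$ and $v_{i+1}$ on $\CC$. Since the little path uses $n$ arcs of $\vc$, i.e., $2n$ edges of $\CC$, the sequence $v_0, w_0, v_1, \dots, w_{n-1}, v_n$ is the walk around $\CC$ from $a$ to $b$ that avoids the edge $ab$, so all $v_i$ and $w_i$ are distinct vertices of $\CC$. After relabelling we may assume $a = u_1$ and $b = u_{2n+1}$, so that $v_i = u_{2i+1}$ and $w_i = u_{2i+2}$; comparing with \eqref{directed-cycle}, the cycle $D$ then decomposes into the two sub-paths
$$P_1 = f_{v_0} \to g_{w_0} \to f_{v_1} \to \cdots \to g_{w_{n-1}} \to f_{v_n}, \qquad P_2 = g_{v_0} \to f_{w_0} \to g_{v_1} \to \cdots \to f_{w_{n-1}} \to g_{v_n},$$
together with the two ``closing'' arcs $f_{b} \to g_{a}$ and $g_{b} \to f_{a}$, which gives
$$\Delta(D) = \Delta(P_1) + \Delta(P_2) + \Delta(f(b)g(a)) + \Delta(g(b)f(a)).$$

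For each consecutive triple $f_{v_i} \to g_{w_i} \to f_{v_{i+1}}$ inside $P_1$, Claim \ref{clm:map} applied to the arc $v_i v_{i+1} \in \vc$ with middle vertex $w_i$ gives $\Delta(f(v_i)g(w_i)) + \Delta(g(w_i)f(v_{i+1})) = -2\,\Delta(f(v_i)f(v_{i+1}))$; summing over $i = 0, \dots, n-1$ yields $\Delta(P_1) = -2 p_f$, and by the symmetric version of Claim \ref{clm:map} (swapping the roles of $f$ and $g$) we obtain $\Delta(P_2) = -2 p_g$. Combining this with $\Delta(D) = -2\ell$ from Lemma \ref{label-lemma}, I obtain
$$p_f + p_g = \ell + \tfrac{1}{2}\bigl(\Delta(f(b)g(a)) + \Delta(g(b)f(a))\bigr).$$
Since $f$ and $g$ are adjacent in $\KKK$ and $ab \in E(\CC)$, we have $f(b) \neq g(a)$ and $g(b) \neq f(a)$, so each summand is $\pm 1$; hence the parenthesised quantity lies in $\{-2, 0, 2\}$, and the bound $\ell - 1 \leq p_f + p_g \leq \ell + 1$ follows.

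The main bookkeeping obstacle I expect is verifying that $P_1$ and $P_2$ really are the two ``halves'' of $D$ as defined in \eqref{directed-cycle}; once the indexing of $\CC$ is fixed so that the little path coincides with the first $n$ arcs of $\vc$, this is a direct comparison, and the rest reduces to two applications of Claim \ref{clm:map} and one of Lemma \ref{label-lemma}.
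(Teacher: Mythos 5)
Your proof is correct and follows essentially the same route as the paper's: both apply Claim \ref{clm:map} along the two halves of $D$ to express $p_f + p_g$ as $\ell + \tfrac{1}{2}\bigl(\Delta(f(b)g(a)) + \Delta(g(b)f(a))\bigr)$ and then bound the two closing arcs. You merely spell out the decomposition of $D$ more explicitly and add the (unneeded but correct) observation that adjacency forces each closing arc to have value $\pm 1$.
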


\begin{proof}
Without loss of generality, let $a = u_1$ and $b=u_{2n+1}$.  Then
$$p_f = \Delta(f(u_1)f(u_{3})) + \Delta(f(u_3)f(u_5)) 
+ \dots +
\Delta(f(u_{2n-1})f(u_{2n+1})),$$ and
$$p_g = \Delta(g(u_1)g(u_{3})) + \Delta(g(u_3)g(u_5)) 
+ \dots +
\Delta(g(u_{2n-1})g(u_{2n+1})).$$
Applying Claim \ref{clm:map}, we have
\begin{eqnarray*}
p_f + p_g & = & 
-\frac{\Delta(D) - \Delta(f(b)g(a)) -
  \Delta(g(b)f(a))}{2} \\
& = & \ell + \frac{\Delta(f(b)g(a)) +
  \Delta(g(b)f(a))}{2}.
\end{eqnarray*}
Since
\begin{eqnarray*}
-2 & \leq & \Delta(f(b)g(a)) +
  \Delta(g(b)f(a)) \quad \leq \quad 2, 
\end{eqnarray*}
the lemma follows.
\end{proof}

\section{Coloring a vertex of $\KE$ in time $O(n)$}\label{coloring-alg}

Now we present an algorithm to assign a color to a function $f$ when
$f \in V(\KE)$ (i.e., $f \in V(\KKK)$ and \lfs $\equiv 0 \pmod 2$).
Our algorithm will produce a 3-coloring for $\KE$ and the time
required to assign a color to $f$ is $O(n)$.  We recall that edge $ab
\in \CC$ is a fixed edge and is considered as input to the algorithm.

\vspace{5mm}

\noindent
\fbox{\parbox{8cm}{

{\sc Color-Vertex($f \in \KE$)}

\begin{enumerate}

\item If $f(a) = f(b)$, assign $f$ color $f(a)$.

\item {Otherwise, compute value $p_f$.

\begin{enumerate}
\item If $p_f < \ell_f/2$,
  assign color $f(a)$.  

\item If $p_f > \ell_f/2$,
  assign color $f(b)$.

\end{enumerate}
}

\end{enumerate}

}}

\vspace{5mm}

The correctness of the algorithm will be shown via the following
theorem, whose proof is based on the observation that if $f$ and $g$
are adjacent in $\BT$, then by Lemma \ref{little-path-lemma}, $p_f$
and $p_g$ are anticorrelated.  For example, if $\ell_f = 0$, then either $p_f$
is positive and $p_g$ is negative or vice versa.  Theorem
\ref{thm:main} implies that using {\sc Color-Vertex} to color the
vertices of $\KE$ results in a proper 3-coloring of $\KE$.

\begin{theorem}\label{thm:main}
Let $f$ and $g$ be two functions from $V(\KE)$ such that $f$ and $g$
are adjacent.  Then 
{\sc Color-Vertex} assigns $f$ and $g$ different colors.
\end{theorem}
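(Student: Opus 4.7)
My plan is to split into cases based on whether $f(a)=f(b)$ and whether $g(a)=g(b)$. Since $f$ and $g$ are adjacent in $\KKK$, the adjacency condition applied to the edge $ab$ of $\CC$ yields both $f(a)\neq g(b)$ and $f(b)\neq g(a)$; I would use these two inequalities throughout. If $f(a)=f(b)$, then {\sc Color-Vertex} assigns $f$ the color $f(a)$, and the two inequalities (together with $f(a)=f(b)$) show that $f(a)$ differs from both $g(a)$ and $g(b)$. Since the color the algorithm assigns $g$ always lies in $\{g(a),g(b)\}$, the two colors differ. The case $g(a)=g(b)$ with $f(a)\neq f(b)$ is symmetric.

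\textbf{The hard case: $f(a)\neq f(b)$ and $g(a)\neq g(b)$.} Here both $f$ and $g$ enter step~2 of the algorithm. I would write $\ell := \ell_f = \ell_g$, using Lemma \ref{label-lemma} for the equality. Since $f,g\in V(\KE)$, the label $\ell$ is even, and by Observation \ref{zero-mod-3} it is also divisible by $3$; hence $\ell\equiv 0\pmod 6$ and $\ell/2$ is an integer multiple of $3$. Observation \ref{notmult} then gives $p_f,p_g\not\equiv 0\pmod 3$, so $p_f\neq \ell/2$ and $p_g\neq \ell/2$, which confirms that the algorithm assigns each of $f,g$ a color in this case.

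\textbf{Main argument.} The key step is to show that $p_f$ and $p_g$ lie on opposite sides of $\ell/2$. Suppose for contradiction that $p_f<\ell/2$ and $p_g<\ell/2$; because $p_f$, $p_g$ and $\ell/2$ are all integers, this forces $p_f+p_g\leq \ell-2$, contradicting the lower bound $p_f+p_g\geq \ell-1$ of Lemma \ref{little-path-lemma}. The symmetric case $p_f>\ell/2$ and $p_g>\ell/2$ is excluded the same way using the upper bound $p_f+p_g\leq \ell+1$. So, up to relabeling, $p_f<\ell/2<p_g$, and {\sc Color-Vertex} assigns $f$ the color $f(a)$ and $g$ the color $g(b)$; these differ by adjacency applied to edge $ab$.

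The main conceptual content is really packaged into Lemma \ref{little-path-lemma}, which squeezes $p_f+p_g$ into $\{\ell-1,\ell,\ell+1\}$; combining this with the integrality of $\ell/2$ (which in turn relies on pulling together the parity condition built into $\KE$ and the mod-$3$ divisibility of $\ell$) leaves no room for both $p_f$ and $p_g$ to sit strictly on the same side of $\ell/2$. Everything else in the argument is direct case-checking against the adjacency condition on the edge $ab$.
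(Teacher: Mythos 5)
Your proof is correct and follows essentially the same route as the paper: the integrality of $\ell/2$ (via Observation \ref{zero-mod-3} and the parity condition defining $\KE$), the non-divisibility of $p_f$ by $3$ (Observation \ref{notmult}), and the squeeze from Lemma \ref{little-path-lemma} are exactly the content of the paper's Claims \ref{clm:eitherOr} and \ref{clm:anticorrelation}. You additionally spell out the easy cases where $f(a)=f(b)$ and the final check that the assigned colors $f(a)$ and $g(b)$ differ by adjacency, which the paper leaves implicit.
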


First, we show via Claim \ref{clm:eitherOr} that Step 2. of {\sc Color-Vertex} is well-defined.
\begin{claim}\label{clm:eitherOr}
Let $f$ be a function in $V(\BT)$.
Then either $p_f < \ell_f/2$ or $p_f > \ell_f/2$.
\end{claim}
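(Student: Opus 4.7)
The plan is to rule out the equality $p_f = \ell_f/2$ by a short modular arithmetic argument, using the two observations already available in the paper: $\ell_f \equiv 0 \pmod 3$ (Observation \ref{zero-mod-3}) and, since $f \in V(\BT)$ means $f(a) \neq f(b)$, $p_f \not\equiv 0 \pmod 3$ (Observation \ref{notmult}).

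First I would note that $p_f$ is, by definition, a sum of integer-valued $\Delta$'s, hence itself an integer. Write $\ell_f = 3k$ for some integer $k$. Suppose for contradiction that $p_f = \ell_f/2 = 3k/2$. Multiplying by $2$, this forces $2p_f = 3k$, so $3 \mid 2p_f$, and since $\gcd(2,3)=1$ we get $3 \mid p_f$, i.e.\ $p_f \equiv 0 \pmod 3$. This directly contradicts Observation \ref{notmult}. Therefore $p_f \neq \ell_f/2$, and since $p_f$ is an integer while $\ell_f/2$ is a real number, exactly one of $p_f < \ell_f/2$ or $p_f > \ell_f/2$ holds.

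I do not anticipate any real obstacle here; the entire content of the claim is that the boundary case $p_f = \ell_f/2$ is excluded by the two mod-$3$ observations on adjacent residues of $\ell_f$ and $p_f$. The only mild subtlety is that $\ell_f/2$ need not be an integer (when $k$ is odd, it is a half-integer, and the inequality is trivial), so the argument naturally splits into the cases $k$ even and $k$ odd. Both cases are handled uniformly by the single implication $p_f = \ell_f/2 \implies 3 \mid p_f$ written above, so the proof should be just a couple of lines.
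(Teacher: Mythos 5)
Your proof is correct and follows essentially the same route as the paper: both rule out the boundary case $p_f = \ell_f/2$ by combining Observation \ref{zero-mod-3} ($3 \mid \ell_f$) with Observation \ref{notmult} ($3 \nmid p_f$). The only, harmless, difference is that the paper first uses $f \in V(\KE)$ to note that $\ell_f$ is even, so that $\ell_f/2$ is an integer multiple of $3$ --- a fact it then reuses in Claim \ref{clm:anticorrelation} to upgrade the strict inequalities to $p_f \leq \ell_f/2 - 1$ or $p_f \geq \ell_f/2 + 1$ --- whereas you sidestep evenness by multiplying by $2$ and invoking $\gcd(2,3)=1$.
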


\begin{cproof}
By Observation \ref{zero-mod-3}, $\ell_f$ is a multiple of 3.  Since
$f \in V(\BT)$, it follows that $\ell_f$ is a multiple of 2.  Thus,
$\ell_f/2$ is an integer and is a multiple of 3.

By Observation \ref{notmult} and the fact that $f \in V(\BT)$, $p_f$
is not a multiple of 3.  Therefore, $p_f \neq \ell_f/2$.  Thus, we
have
\begin{eqnarray*}
p_f & \leq & \ell_f/2 \implies p_f \leq \ell_f/2 -1, \text{ and }\\
p_f & \geq & \ell_f/2 \implies p_f \geq \ell_f/2 + 1.
\end{eqnarray*}
\end{cproof}

Next, we show that two adjacent functions in $\BT$ will not be
assigned to the same side of the bipartition, and thus will not be
assigned the same color.
\begin{claim} \label{clm:anticorrelation}
Let $f$ and $g$ be two adjacent functions in $V(\BT)$
 and let $\ell = \ell_f = \ell_g$.  Then $p_f$ and $p_g$ cannot both
 be greater (or both smaller) than $\ell/2$.
\end{claim}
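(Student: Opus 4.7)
The plan is to combine the two ingredients that have just been established: Lemma \ref{little-path-lemma}, which sandwiches the sum $p_f + p_g$ between $\ell - 1$ and $\ell + 1$, and Claim \ref{clm:eitherOr}, which upgrades the strict inequalities $p_f \neq \ell/2$ and $p_g \neq \ell/2$ to gaps of at least $1$ on either side. The argument is essentially a pigeonhole contradiction: two ``same-side'' deviations from $\ell/2$ would sum to a deviation of at least $2$ from $\ell$, which Lemma \ref{little-path-lemma} forbids.

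Concretely, I would argue by contradiction in two symmetric cases. First suppose $p_f > \ell/2$ and $p_g > \ell/2$. Since both $f$ and $g$ lie in $V(\BT)$, Claim \ref{clm:eitherOr} strengthens these to $p_f \geq \ell/2 + 1$ and $p_g \geq \ell/2 + 1$, so $p_f + p_g \geq \ell + 2$, contradicting the upper bound of Lemma \ref{little-path-lemma}. Symmetrically, if $p_f < \ell/2$ and $p_g < \ell/2$, then $p_f \leq \ell/2 - 1$ and $p_g \leq \ell/2 - 1$, giving $p_f + p_g \leq \ell - 2$ and contradicting the lower bound.

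There is essentially no obstacle here; the real work was done upstream. The only thing to double-check is that Claim \ref{clm:eitherOr} really applies to both $f$ and $g$, i.e.\ that both functions lie in $V(\BT)$ (so that $f(a) \neq f(b)$ and $g(a) \neq g(b)$), which is precisely the hypothesis of the claim. Once this is in place, the proof is two lines of arithmetic and the theorem \ref{thm:main} follows immediately by combining Claim \ref{clm:anticorrelation} with the case analysis of \textsc{Color-Vertex} (the monochromatic-on-$ab$ case being handled by the fact that adjacent functions in $\KE$ with $f(a) = f(b) = g(a) = g(b)$ would force $f(a) \neq g(b)$, a contradiction).
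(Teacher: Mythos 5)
Your proof is correct and is essentially the paper's own argument: both combine Lemma \ref{little-path-lemma} with the integrality facts behind Claim \ref{clm:eitherOr} (that $p_f$ and $\ell/2$ are integers with $p_f \neq \ell/2$, hence a gap of at least $1$), the only cosmetic difference being that you run it as a contradiction (``both large forces $p_f+p_g \geq \ell+2$'') while the paper argues directly that $p_f \leq \ell/2-1$ forces $p_g > \ell/2$ and vice versa. Your closing remark about the monochromatic case of \textsc{Color-Vertex} is also sound, though it belongs to Theorem \ref{thm:main} rather than to this claim.
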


\begin{cproof}
By Lemma \ref{little-path-lemma}, we have
\begin{eqnarray*}
\ell-1 \leq p_f +  p_g \leq \ell + 1.
\end{eqnarray*}
Let us consider two cases.  The first case is when 
$p_f \leq \ell/2 - 1$.  Then we have
\begin{eqnarray*}
\ell - 1 - p_g & \leq & p_f ~ \leq ~ \ell/2 -1 ~ \Rightarrow\\
p_g & \geq & \ell/2.
\end{eqnarray*}
By Claim \ref{clm:eitherOr}, we conclude that $p_g > \ell/2$.
The second case is when $p_f \geq \ell/2  + 1$.  Then we have
\begin{eqnarray*}
\ell/2 + 1 & \leq & p_f ~ \leq ~ \ell + 1 - p_g ~ \Rightarrow\\
p_g & \leq & \ell/2.
\end{eqnarray*}
By Claim \ref{clm:eitherOr}, we conclude that $p_g < \ell/2$.
\end{cproof}

\section{Explicit homomorphisms for odd cycles}

For a graph $H$ and a cycle $C_k$ for odd integer $k \geq 5$, one can
define (as in the introduction) the exponential graph $C_k^H$.  The
vertices of $C_k^H$ are functions from $V(H)$ into $V(C_k)$ and two
such functions $f$ and $g$ are adjacent if for all edges $uv$ of $H$,
$f(u)$ and $g(v)$ are adjacent in $C_k$.  H\"aggkvist, Hell, Miller
and Neumann Lara proved that if there is no homomorphism from $H$ to
$C_k$, then $\CE$ has a homomorphism to
$C_k$~\cite{haggkvist1988multiplicative}.  Their proof can be viewed
as a generalization of the work of El-Zahar and Sauer, who proved the
same statement when $k=3$.  In fact, as in the case in the latter
proof of El-Zahar and Sauer, it is also implicit in the proof of
H\"aggkvist et al. that if we consider an (arbitrary) edge $ab$ in any
odd cycle of $H$, then functions $f$ in which $f(a) = f(b)$ form a
hitting set of the odd cycles in $\CE$.  It is therefore not
surprising that we can extend our framework for obtaining explicit
homomorphisms to odd cycles.

As in the case of $C_3$, we can find a homomorphism from $\CE$
to $C_k$ by considering an arbitrary odd cycle in $H$.  Applying Lemma
7 from \cite{haggkvist1988multiplicative}, we see that if there is no
homomorphism from $H$ to $C_k$, then $H$ contains an odd cycle with an
even number of fixed points.\footnote{In
  \cite{haggkvist1988multiplicative}, a fixed point is called a {\em
    2-point}.}  We refer to this odd cycle as $C_{2n+1}$.  It remains
to generalize the two key properties of adjacent functions (i.e.,
Lemmas \ref{label-lemma} and \ref{little-path-lemma}).
For an ordered pair of vertices $uv$ (i.e., an {\em arc} $uv$) with
values $c(u)$ and $c(v)$ (from $V(C_k)$), respectively, we say the
{\em value} of $uv$ is $\Delta(c(u)c(v))$, where
\begin{equation}\label{cycle-rules}
    \Delta(ij) = 
    \begin{cases}
      +1, & \text{if}\ j-i \equiv 2 \bmod k,\\
      -1, & \text{if}\ i-j \equiv 2 \bmod k,\\
      +\frac{1}{2}, & \text{if}\ j-i \equiv 1 \bmod k,\\
      -\frac{1}{2}, & \text{if}\ i-j \equiv 1 \bmod k,\\
       0, & \text{if}\ i-j \equiv 0 \bmod k.
    \end{cases}
  \end{equation}
For example, we have
\begin{eqnarray*}
\Delta(13) = \Delta(24) = \Delta(k2) = +1 ~\text{ and }~
\Delta(31) = \Delta(42) = \Delta(2k) = -1. 
\end{eqnarray*}
Let $f$ and $g$ be two adjacent functions from $V(C_{2n+1})$ to
$V(C_k)$.  We can apply the rules from \eqref{cycle-rules} to compute
the values $\ell_f, \ell_g$ and $p_f, p_g$ via Definitions
\ref{def:chord-cycle} and \ref{def:little-path}.  Note that if
$|f(u)-f(v)| \not \equiv \{0,2\} \bmod k$ for some arc $uv \in \vc$,
then $f$ is an isolated function in $C_k^{C_{2n+1}}$.  Moreover,
note that if $|f(u)-g(v)| \not \equiv 1 \bmod k$ for some arc $uv \in
C_{2n+1}$, then $f$ and $g$ are not adjacent.

The next lemmas are the generalizations of Lemma \ref{label-lemma} and
Lemma \ref{little-path-lemma} for homomorphisms to an odd cycle $C_k$.

\begin{lemma}\label{label-lemma-homomorphism}
Let $f$ and $g$ be two functions from $V(\CC)$ into $V(C_k)$ such
that $f$ and $g$ are adjacent.
Then $\ell_{f} = \ell_{g}$.
\end{lemma}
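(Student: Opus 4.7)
The plan is to mimic the proof of Lemma \ref{label-lemma} by first establishing the following analog of Claim \ref{clm:map} in the $C_k$ setting: for adjacent functions $f, g$ from $V(\CC)$ into $V(C_k)$ and any index $i$,
$$\Delta(f(u_i)f(u_{i+2})) = \Delta(f(u_i)g(u_{i+1})) + \Delta(g(u_{i+1})f(u_{i+2})).$$
Note the coefficient $+1$ here, in contrast to the $-\tfrac{1}{2}$ appearing in Claim \ref{clm:map}. In $C_k$ (with $k \geq 5$) the common neighbor $g(u_{i+1})$ of $f(u_i)$ and $f(u_{i+2})$ lies on a short path between them rather than in an ``opposite'' position as in $K_3$, so the $\pm\tfrac{1}{2}$ contributions add (or cancel) to give the chord value directly, rather than twice its negation.

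To verify the identity, I would observe that since $f$ and $g$ are adjacent, $g(u_{i+1})$ must be a neighbor in $C_k$ of both $f(u_i)$ and $f(u_{i+2})$; this yields four cases according as $g(u_{i+1}) \equiv f(u_i) \pm 1$ and $f(u_{i+2}) \equiv g(u_{i+1}) \pm 1 \pmod{k}$. A direct check using \eqref{cycle-rules} confirms the identity in each case, with both sides landing in $\{-1, 0, +1\}$. The assumption $k \geq 5$ is what ensures that the rules in \eqref{cycle-rules} are unambiguous (i.e., the residues $0, \pm 1, \pm 2$ modulo $k$ are all distinct).

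Summing the identity over $i = 1, \dots, 2n+1$ and reindexing the second sum gives
$$\ell_f = \sum_{i=1}^{2n+1}\Delta(f(u_i)g(u_{i+1})) + \sum_{i=1}^{2n+1}\Delta(g(u_i)f(u_{i+1})) = \Delta(D),$$
which matches the quantity in \eqref{delta-cycle}. Interchanging the roles of $f$ and $g$ in the identity yields $\ell_g = \Delta(D)$ as well, so $\ell_f = \ell_g$.

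The main obstacle is the four-case verification of the identity, which is routine arithmetic but must be checked carefully to ensure that the half-integer $\Delta$-values coming from arcs between $f$ and $g$ combine correctly into the integer chord value. Once the identity is in place, the summation step is essentially identical to the proof of Lemma \ref{label-lemma}; the coefficient change from $-\tfrac{1}{2}$ to $+1$ simply replaces the relation $\ell_f = -\Delta(D)/2$ by $\ell_f = \Delta(D)$, without affecting the conclusion.
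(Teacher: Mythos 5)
Your proposal is correct and follows exactly the route the paper intends: the paper's own proof simply asserts that $\ell_f = \ell_g = \Delta(D)$ is ``straightforward to prove,'' and your four-case verification of the identity $\Delta(f(u_i)f(u_{i+2})) = \Delta(f(u_i)g(u_{i+1})) + \Delta(g(u_{i+1})f(u_{i+2}))$, followed by summation and reindexing, is precisely the omitted computation. Your observation about the sign/coefficient change relative to Claim \ref{clm:map} (and the role of $k \geq 5$ in keeping the residues $0, \pm 1, \pm 2$ distinct) is accurate.
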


\begin{proof}
Recall the definition of the directed cycle $D$ and $\Delta(D)$ from
\eqref{directed-cycle} and \eqref{delta-cycle}.  It is
straightforward to prove that $\ell_f = \ell_g = \Delta(D)$. 
\end{proof}

\begin{lemma}\label{little-path-lemma-homomorphism}
Let $f$ and $g$ be two functions from $V(C_{2n+1})$ into $V(C_k)$
such that $f$ and $g$ are adjacent and let $\ell = \ell_f =\ell_g$.
Then $\ell -1 \leq p_f + p_g \leq \ell + 1$.
\end{lemma}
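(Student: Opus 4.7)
The plan is to mirror the proof of Lemma \ref{little-path-lemma} using an analogue of Claim \ref{clm:map} suitable for the $C_k$ setting. Since $f$ and $g$ are adjacent, for each $i$ we have $g(u_{i+1}) - f(u_i) \equiv \pm 1 \pmod k$ and $f(u_{i+2}) - g(u_{i+1}) \equiv \pm 1 \pmod k$. A routine four-case analysis directly from \eqref{cycle-rules} should give
$$\Delta(f(u_i)f(u_{i+2})) = \Delta(f(u_i)g(u_{i+1})) + \Delta(g(u_{i+1})f(u_{i+2})).$$
Note the absence of the $-1/2$ factor from the $k=3$ version; this is exactly what makes Lemma \ref{label-lemma-homomorphism} give $\ell_f = \Delta(D)$ rather than $-\Delta(D)/2$.

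Next, I apply this identity to each arc of the little path for both $f$ and $g$ (assuming WLOG $a = u_1$ and $b = u_{2n+1}$), which expresses $p_f + p_g$ as a sum of $4n$ of the $4n+2$ arcs of $D$. A bookkeeping check on the parities of indices shows that the two missing arcs are precisely the ``bridge'' arcs $f(b)g(a)$ and $g(b)f(a)$ connecting the two halves of $D$. Combined with Lemma \ref{label-lemma-homomorphism}, this yields
$$p_f + p_g = \Delta(D) - \Delta(f(b)g(a)) - \Delta(g(b)f(a)) = \ell - \Delta(f(b)g(a)) - \Delta(g(b)f(a)).$$

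Finally, since $ab$ is an edge of $C_{2n+1}$ and $f, g$ are adjacent, both $f(b)g(a)$ and $g(b)f(a)$ connect adjacent vertices in $C_k$, so each $\Delta$-value equals $\pm 1/2$. Their sum therefore lies in $\{-1, 0, +1\}$, giving $\ell - 1 \leq p_f + p_g \leq \ell + 1$. The main obstacle is the arc bookkeeping: the case analysis for the generalized claim is routine, but verifying that the $4n$ arcs appearing in $p_f + p_g$ are exactly $D$ minus the two bridge arcs requires careful attention to which index parities line up with which ``half'' of $D$.
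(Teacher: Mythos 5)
Your proposal is correct and follows essentially the same route as the paper: the paper's proof simply asserts the identity $p_f + p_g = \Delta(D) - \Delta(f(b)g(a)) - \Delta(g(b)f(a))$ and the bound $|\Delta(f(b)g(a)) + \Delta(g(b)f(a))| \leq 1$, while you spell out the underlying generalized version of Claim~\ref{clm:map} (with the sign and the absent $-1/2$ factor correctly accounted for) and the arc bookkeeping that the paper leaves implicit.
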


\begin{proof}
Let $a = u_1$ and
$b=u_{2n+1}$.   We observe that
\begin{eqnarray*}
p_f + p_g & = & 
\Delta(D) - \Delta(f(b)g(a)) -
  \Delta(g(b)f(a)) \\
& = & \ell - \Delta(f(b)g(a)) -
  \Delta(g(b)f(a)).
\end{eqnarray*}
Since
\begin{eqnarray*}
-1 & \leq & \Delta(f(b)g(a)) +
  \Delta(g(b)f(a)) \quad \leq \quad 1, 
\end{eqnarray*}
the lemma follows.
\end{proof}

It is straightforward to extend Claims \ref{clm:eitherOr} and
\ref{clm:anticorrelation} to this generalized setting and we obtain
the following theorem.

\begin{theorem}\label{thm:main-homomorphism}
Let $f$ and $g$ be two functions from $V(C_{2n+1})$ into $V(C_k)$ such
that $f$ and $g$ are adjacent.  Then {\sc Color-Vertex} assigns $f$
and $g$ to adjacent vertices in $V(C_k)$.
\end{theorem}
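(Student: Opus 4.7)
My plan is to mirror the proof of Theorem \ref{thm:main}, with ``different colors'' replaced by ``adjacent in $C_k$.'' The algorithm {\sc Color-Vertex} still assigns $f$ one of $f(a)$ or $f(b)$ by exactly the same rules. The first step is to record analogues of Observations \ref{zero-mod-3} and \ref{notmult}: for a non-isolated $f$, the rules \eqref{cycle-rules} give $f(v) - f(u) \equiv 2\,\Delta(f(u)f(v)) \pmod k$ on every chord arc, so telescoping around $\vcs$ yields $2\ell_f \equiv 0 \pmod k$; since $k$ is odd, this forces $\ell_f \equiv 0 \pmod k$. The same telescoping along $\pab$ gives $f(b) - f(a) \equiv 2 p_f \pmod k$, so $f(a) \neq f(b)$ forces $p_f \not\equiv 0 \pmod k$.

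Next I would extend Claims \ref{clm:eitherOr} and \ref{clm:anticorrelation}. Because $\Delta$ takes values in $\{0, \pm 1\}$ on chord arcs of a non-isolated $f$, the parity of $\ell_f$ equals the number of fixed points of $f$ modulo $2$; for $f$ in the analog of $V(\BT)$ (non-isolated, even number of fixed points, $f(a) \neq f(b)$) this makes $\ell_f$ both even and divisible by $k$, hence divisible by $2k$. Thus $\ell_f/2$ is an integer multiple of $k$ while $p_f$ is an integer not divisible by $k$, giving $|p_f - \ell_f/2| \geq 1$. Anticorrelation then follows from Lemma \ref{little-path-lemma-homomorphism} exactly as in Claim \ref{clm:anticorrelation}: if $p_f$ and $p_g$ both lay on the same side of $\ell/2$, the sum $p_f + p_g$ would escape the interval $[\ell - 1, \ell + 1]$.

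Finally I would conclude by case analysis on the pair $(f,g)$. If $f(a) = f(b)$, then $f$ is colored $f(a)$; since $f$ and $g$ are adjacent in $C_k^{C_{2n+1}}$ and $ab \in E(C_{2n+1})$, both $g(a)$ and $g(b)$ lie in $N_{C_k}(f(a))$, so whichever of these two values {\sc Color-Vertex} assigns to $g$ is adjacent to $f(a)$ in $C_k$; the case $g(a) = g(b)$ is symmetric. Otherwise both $f(a) \neq f(b)$ and $g(a) \neq g(b)$, and the generalized anticorrelation claim puts $p_f$ and $p_g$ on opposite sides of $\ell/2$; without loss of generality $f$ gets color $f(a)$ and $g$ gets color $g(b)$, and adjacency of $f$ and $g$ in $C_k^{C_{2n+1}}$ makes $f(a)g(b)$ an edge of $C_k$, as desired. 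The only genuinely new ingredient is the step $\ell_f \equiv 0 \pmod k$, where oddness of $k$ is essential; the rest is an almost mechanical translation of the Section \ref{coloring-alg} argument.
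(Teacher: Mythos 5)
Your proof is correct and follows the same route the paper intends: the paper's own ``proof'' merely asserts that Claims \ref{clm:eitherOr} and \ref{clm:anticorrelation} extend, and you supply exactly the missing ingredients --- the telescoping identities giving $\ell_f \equiv 0 \pmod k$ and $p_f \not\equiv 0 \pmod k$ when $f(a)\neq f(b)$ (both using oddness of $k$), plus the observation that an even number of fixed points makes $\ell_f$ divisible by $2k$ so that $\ell_f/2$ is an integer multiple of $k$. Note that, as you implicitly recognize by working in the analogue of $V(\BT)$, the hypothesis that $f$ and $g$ have an even number of fixed points is genuinely needed for the anticorrelation step, even though the paper's statement of the theorem omits it.
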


\section{Discussion: Explicit versus efficient colorings}

For a graph $H$ such that $\chi(H) > 3$, the question of finding an
{\em explicit} 3-coloring of $K_3^H$ is closely related to---but not
exactly the same as---the question of finding an {\em efficient}
3-coloring of $K_3^H$.  A connected component of $K_3^H$ is either (i)
isolated (i.e., a single vertex), (ii) bipartite, or (iii)
3-chromatic.  For a function $f$ from $V(H)$ into $\{1,2,3\}$, it can
be efficiently determined (in time polynomial in the size of $H$)
whether or not $f$ is isolated.  For any given connected component of
$K_3^H$ that is bipartite or 3-chromatic, there exists an odd cycle
that can be found efficiently (as discussed in Section
\ref{from-coloring}) and this odd cycle can be used to obtain an
explicit and efficient 3-coloring for this component.
In other words, for a given connected component, after a polynomial
amount of preprocessing time (i.e., time to find an odd cycle and to
fix an orientation of its chord cycle and an edge $ab$ to use as input
for the {\sc Vertex-Color} routine), we give an {\em explicit} reason
(i.e., certificate) for assigning a particular color to a function $f$
in the given connected component.  In particular, the value $\pab$ of
the little path from $a$ to $b$ is such a short certificate.  In terms
of efficiency, for any
subset $S$ of functions in $V(K_3^H)$ belonging to a fixed connected
component, the total time required to color the subgraph induced on
$S$ is $O(|S|\cdot|H|)$.

Moreover, for a function $f$ belonging to {\em any} 3-chromatic
component of $K^H_3$, we can actually use an arbitrary fixed cycle
$C_{2n+1}$ from $H$ for the {\sc Color-Vertex} routine.  However, the
fact that we can use the same cycle in $H$ for any such $f$ follows
from the main result of El-Zahar and Sauer; Proposition 3.4 in
\cite{el1985chromatic} states that for such an $f$, {\em all} odd
cycles in $H$ have an even number of fixed points.  Note that the
results we have presented here do not imply a proof of this
proposition.  Thus, while such a function $f$ can in fact be assigned
a color efficiently (i.e., in time $O(|H|)$), this time complexity
is not implied solely by the results we have presented here.

For $f$ belonging to an arbitrary bipartite component of $K^H_3$,
using the approach presented in this paper, we can assign a color to
$f$ in time $O(|B| \cdot |H|)$, where $B$ is the set of functions for
which we searched for a new odd cycle containing an even number of
fixed points.  Note that $|B|$ is upper bounded by the number of
functions also belonging to bipartite components previously colored by
the algorithm.  In other words, the algorithm is input-sensitive;
before invoking the {\sc Color-Vertex} routine on $f$, we need to
check all odd cycles used so far (in the order used) until we find one
with an even number of fixed points with respect to $f$.  We leave it
as an open problem to find an algorithm that assigns a color in time
$O(|H|)$ to a vertex $f$ from a bipartite component of $K_3^H$, so
that the resulting coloring is a proper 3-coloring or possibly even a
2-coloring.  Finally, we note that we do not know how to efficiently
determine if a function $f$ belongs to a bipartite component (i.e.,
whether or not it contains at least one odd cycle with an odd number
of fixed points).

\section{Acknowledgements}

We thank St\'ephan Thomass\'e for telling us about Hedetniemi's
Conjecture and for, even earlier, telling us about the results in
\cite{cereceda2007finding}.

\bibliography{hedet4arxiv}

\end{document}